\newtheorem{algo}{Algorithm}[section]
\begin{document}

\title{Iterative methods for computing U-eigenvalues of non-symmetric complex tensors with application in  quantum entanglement
}

\titlerunning{Iterative methods for computing U-eigenvalues of non-symmetric complex tensors}   

\author{Mengshi Zhang\and Guyan Ni    \and
        Guofeng Zhang 
}


\institute{Mengshi Zhang \at
              Department of Mathematics, National University of
              Defense Technology, Changsha, Hunan 410073, China.  \\
              \email{msh$_{-}$zhang@163.com}
              \and
              Guyan Ni \at
              Corresponding Author. Department of Mathematics, National University of
              Defense Technology, Changsha, Hunan 410073, China.  \\
              \email{guyan-ni@163.com}           
           \and
           Guofeng Zhang \at
             Department of Applied Mathematics,
The Hong Kong Polytechnic University, Hong Kong, 999077, China.\\
              \email{guofeng.zhang@polyu.edu.hk}
}

\date{Received: date / Accepted: date}

\maketitle

\begin{abstract}
The purpose of this paper is to study the problem of computing unitary eigenvalues (U-eigenvalues) of non-symmetric complex tensors.  By means of  symmetric embedding of complex tensors,  the relationship between U-eigenpairs of a non-symmetric complex tensor and unitary symmetric eigenpairs (US-eigenpairs) of its symmetric embedding tensor is established.  An algorithm  (Algorithm \ref{algo:1}) is given to compute the U-eigenvalues of non-symmetric complex tensors by means of symmetric embedding. Another algorithm, Algorithm \ref{algo:2}, is proposed to directly compute the U-eigenvalues of non-symmetric complex tensors, without the aid of symmetric embedding.  Finally, a tensor version of the well-known Gauss-Seidel method is developed. Efficiency of these three algorithms are compared by means of various numerical examples. These algorithms are applied to compute the geometric measure of entanglement of quantum multipartite non-symmetric pure states.
\keywords{complex tensor \and unitary eigenvalue \and iterative method \and geometric measure of entanglement}
\subclass{15A18 \and 15A69 \and 81P40}
\end{abstract}

\section{Introduction}

There is a variety of ways to define tensor eigenvalues, e.g., Z-eigenvalue\cite{qi05}, H-eigenvalue\cite{L05}, U-eigenvalue\cite{NQB14} and generalized eigenvalue\cite{KKT08}. The problem of computing the eigenvalues of a tensor has been proved to be NP-hard\cite{Hill13}.  An increasing number of numerical methods have been proposed in the last decades.   Kolda et al. \cite{kdm11} introduced shifted symmetric high order power method (SS-HOPM) to compute Z-eigenvalues of symmetric real tensors.  By means of a Jacobian semidefinite programming (SDP) relaxation method \cite{Las2001},   Z-eigenvalues of real tensors have been computed in Nie et al. \cite{NieW2014}  and Cui et al. \cite{CuiDN2014}. There are also several methods for computing Z-eigenvalues of real tensors, see, e.g., \cite{QWW09,HCD15}.  Kolda et al. \cite{kdm14} extended SS-HOPM to an adaptive shifted symmetric high order power method for computing generalized tensor eigenvalues. Chen et al. \cite{CHL2016} studied the generalized tensor eigenvalue problem  via homotopy methods. An adaptive gradient method for computing generalized tensor eigenpairs has been developed in \cite{YYXSZ16}. Fu et al. \cite{FJL18} derived new algorithms to compute best rank-one approximation of conjugate partial-symmetric (CPS) tensors by unfolding CPS tensors to Hermitian matrices. Che et al. \cite{CCW17} proposed a neural networks method for computing the generalized eigenvalues of complex tensors, and Che et al.\cite{CQW17} also derived an iterative algorithm for computing US-eigenpairs of complex symmetric tensors and U-eigenpairs of complex tensors based on the Takagi factorization of complex symmetric matrices. Hua et al. \cite{HNZ16} computed the largest US-eigenvalue of a symmetric complex tensor.  Ni et al. \cite{NB16} computed US-eigenpairs for symmetric complex tensors via the spherical optimization with complex variables.  However, there are relatively few studies on computing eigenvalues of non-symmetric complex tensors.

This paper aims to propose three methods to compute U-eigenvalues of non-symmetric complex tensors. We first build a one-to-one correspondence between a U-eigenpair of a non-symmetric complex tensor and a US-eigenpair of its symmetric embedding. Based on symmetric embedding,  Algorithm \ref{algo:1} is proposed to compute U-eigenpairs of non-symmetric complex tensors. Unfortunately, due to symmetric embedding, the size of the resulting tensor used in Algorithm \ref{algo:1} is usually very large, which significantly affects the  computational efficiency  of Algorithm \ref{algo:1}. To circumvent this difficulty,   Algorithm \ref{algo:2} is proposed to   compute the  U-eigenpairs of non-symmetric complex tensors directly. Convergence of Algorithms \ref{algo:1} and \ref{algo:2} are  established.  Finally, Algorithm \ref{algo:3}, a tensor version of the Gauss-Seidel method, is proposed.

Quantum entanglement was first introduced by Einstein and Schr\"{o}dinger\cite{Ein35,Sch35}, and it is regarded as one of the most important and fundamental notions in quantum information\cite{MI2000}. The geometric measure of entanglement (GME) is one of the most important and widely used measures for quantum entanglement\cite{MI2000,DLJ2001,FDO2016,QL2017,RT2014,Ben96,Ve97,Har03}. The GME was first proposed by Shimony \cite{S95} in 1995 for bipartite states and generalized to multipartite states by Wei and Goldbart \cite{WG03} in 2003.  Mathematically, a quantum pure state can be described in terms of a tensor (or hypermatrix), thus the problem of computing the GME of a pure state can be converted into a tensor eigenvalue computation problem\cite{NQB14,Hill10,hqz12,QZN2017,NZZ2017,HNZ16,Hay09}.  Theorem 1 in \cite{hqz12} indicates that the GME of a symmetric pure state with non-negative probability amplitudes is equal to the largest Z-eigenvalue of the corresponding non-negative tensor. Ni et al.\cite{NQB14}  found that for some real tensors, the  largest absolute-value of the Z-eigenvalue may not be equal to the entanglement eigenvalue. Hence, Ni et al.\cite{NQB14} introduced the concept of U-eigenvalue of complex tensors, and showed that the problem of computing the entanglement eigenvalue of a pure state is equivalent to the problem of computing the largest U-eigenvalue of the corresponding tensor. As an application, we apply  Algorithms \ref{algo:1}-\ref{algo:3} to compute the GME of quantum pure states.

The paper is organized as follows. In Section 2, we introduce the concept of GME of multipartite pure states,   U-eigenpairs of tensors, and tensor blocking. In Section 3, we propose three algorithms to compute the U-eigenvalues of non-symmetric complex tensors, and some basic theorems are also proved. In Section 4, we present numerical examples for various non-symmetric pure states, and compare the efficiency of these three algorithms. Section 5 concludes this paper.

\section{Preliminaries}
This section introduce  the geometric measure of  entanglement of quantum multipartite pure states, U-eigenvalues of complex tensors, and tensor blocking. Interested reader may refer to\cite{NQB14, hqz12, NZZ2017, RV2013} for details.

\subsection{Geometric measure of entanglement (GME) of multipartite pure states}
Quantum states are fundamental quantities in a quantum system. For an $m$-partite quantum system where the dimension of  the $k$th party is $n_k$, $(k=1, \ldots, m)$, its pure states are elements of the tensor product space $H=\otimes^m_{k=1}\mathbb{C}^{n_k} \equiv \mathbb{C}^{n_1\times\cdots \times n_m}$. Let $\{|e^{(k)}_{i_k}\rangle : i_k=1,2,...,n_k\}$ be an orthonormal basis of
$\mathbb{C}^{n_k}$. Then $\{|e^{(1)}_{i_1}e^{(2)}_{i_2}\cdots e^{(m)}_{i_m}\rangle : i_k=1,2,...,n_k; k=1,2,...,m\}$ is an orthonormal basis of $H$. A pure state $|\psi\rangle \in H$ can be written as
\begin{equation}\label{state1}
|\psi\rangle :=\sum^{n_1,...,n_m}_{i_1,...,i_m=1}\mathcal{X}_{i_1...i_m}|e^{(1)}_{i_1}\cdots e^{(m)}_{i_m}\rangle,
\end{equation}
where $\mathcal{X}_{i_1...i_m}\in\mathbb{C}$. $|\psi\rangle$ is called symmetric if $\mathcal{X}_{i_1...i_m}$ remains the same for all permutations of indices  $\{i_1,...,i_m\}$. Let
\begin{equation}\label{state2}
|\varphi\rangle :=\sum^{n_1,...,n_m}_{i_1,...,i_m=1}\mathcal{Y}_{i_1...i_m}|e^{(1)}_{i_1}\cdots e^{(m)}_{i_m}\rangle.
\end{equation}
be another pure state. The inner product and norm of multipartite pure states are denoted as
$$
\langle\psi|\varphi\rangle := \sum^{n_1,...,n_m}_{i_1,...,i_m=1}\mathcal{X}^*_{i_1...i_m}\mathcal{Y}_{i_1...i_m}, ~~|||\varphi\rangle||:=\sqrt{\langle\varphi|\varphi\rangle},
$$
where $\mathcal{X}^*_{i_1...i_m}$ is the complex conjugate of $\mathcal{X}_{i_1...i_m}$.  If $|||\varphi\rangle||=1$, then the state $|\varphi\rangle$ is a normalized state, also called unit state. All the quantum states used in this paper are assumed to be normalized states.
\begin{definition}
  An $m$-partite pure state $|\phi\rangle$ is called separable, if it can be written as
$$
|\phi\rangle:=\otimes^m_{k=1}|\phi^{(k)}\rangle,
$$
where $|\phi^{(k)}\rangle\in \mathbb{C}^{n_k}$. If an $m$-partite pure state is not separable, then it is an entangled state.
\end{definition}

The set of all separable normalized pure states in $H$  is denoted as $Separ(H)$. The  GME of a given multipartite pure state $|\psi\rangle$ can be defined as the distance between $|\psi\rangle$ and $Separ(H)$
\begin{equation}\label{distance}
  E_G(|\psi\rangle):=\min_{|\phi\rangle\in Separ(H)}|||\psi\rangle-|\phi\rangle||.
\end{equation}
As the objective function in (\ref{distance}) is a continuous function on a compact set in a finite dimensional space, the minimizer of (\ref{distance}) does exist. Actually, (\ref{distance}) can be converted to a maximization problem $$G(|\psi\rangle)=\max_{|\phi\rangle\in Separ(H)} |\langle\psi|\phi\rangle|.$$
$G(|\psi\rangle)$ is called the maximal overlap of a given $m$-partite pure state, and it is also called
 entanglement eigenvalue in \cite{WG03}.
By \cite{QZN2017}, the GME of  $|\psi\rangle$ is equal to
\begin{equation}\label{geometric measure1}
E_G(|\psi\rangle)=\sqrt{2-2G(|\psi\rangle)}.
\end{equation}
Clearly, the smaller the maximal overlap is, the larger the GME of $|\psi\rangle$ is. In quantum physics, a large geometric measure usually  indicates that the state $|\psi\rangle$ is more entangled.
\subsection{Complex tensors and their U-eigenpairs}
For the pure state $|\psi\rangle$ defined in (\ref{state1}), one can define an $m$th-order complex tensor  $\mathcal{X}=(\mathcal{X}_{i_1\cdots i_m})$. In other words, each quantum pure state corresponds to a complex tensor. Hence, we can calculate the GME of a quantum state by means of its corresponding complex tensor. If $\mathcal{X}_{i_1...i_m}$ remains the same for all permutations of indices  $\{i_1,...,i_m\}$, then $\mathcal{X}$ is called symmetric. For $\mathcal{X},\mathcal{Y}\in H$, the inner product and norm are defined as
  $$
  \langle \mathcal{X},\mathcal{Y}\rangle:=\sum^{n_1,...,n_m}_{i_1,...,i_m=1} \mathcal{X}^*_{i_1...i_m}\mathcal{Y}_{i_1...i_m}, \ \ ||\mathcal{X}||:=\sqrt{\langle \mathcal{X},\mathcal{X}\rangle},
  $$
where $\mathcal{X}^*_{i_1...i_m}$ denotes the complex conjugate of $\mathcal{X}_{i_1...i_m}$.

A tensor can be geometrically viewed as a multi-liner function, and it can be represented by a linear combination of outer products of vectors. Let $\mathbf{x}^{(i)}\in \mathbb{C}^{n_i}$, $i=1,\cdots,m$, their outer product $\mathbf{x}^{(1)}\otimes \mathbf{x}^{(2)}\otimes \cdots\otimes \mathbf{x}^{(m)}$, denoted as $\otimes_{i=1}^m \mathbf{x}^{(i)}$, is called a rank-one tensor, whose components are
$$(\otimes_{i=1}^m \mathbf{x}^{(i)})_{i_1...i_m}:=\mathbf{x}^{(1)}_{i_1}\cdots \mathbf{x}^{(m)}_{i_m}.$$
In particular, when $\mathbf{x}^{(1)}=\cdots=\mathbf{x}^{(m)}=\mathbf{x}\in\mathbb{C}^{n}$, we write $\otimes_{i=1}^m \mathbf{x}^{(i)}$ as $\otimes_{i=1}^m \mathbf{x}$ or simply as $\mathbf{x}^m$, which clearly is a symmetric rank-one complex tensor.

By the notation in \cite{NQB14}, for $\mathcal{T}\in H$, we denote the inner product between $\mathcal{T}$ and a rank-one tensor $\otimes_{i=1}^m \mathbf{x}^{(i)}$ by a homogenous polynomial
$$
\langle \mathcal{T},\otimes_{i=1}^m \mathbf{x}^{(i)}\rangle\equiv \mathcal{T}^{*}\mathbf{x}^{(1)}\cdots \mathbf{x}^{(m)} :=\sum_{i_1,\cdots,i_m=1}^{n_1,\cdots,n_m} \mathcal{T}_{i_1\cdots i_m}^* x_{i_1}^{(1)}\cdots x_{i_m}^{(m)}.
$$
Moreover, $\langle \mathcal{T},\otimes_{i=1,i\not=k}^m \mathbf{x}^{(i)}\rangle$ denotes a vector in $\mathbb{C}^{n_k}$, whose $i_k$-th components are
$$
\langle \mathcal{T},\otimes_{i=1,i\not=k}^m \mathbf{x}^{(i)}\rangle_{i_k} :=\sum_{i_1,\cdots,i_{k-1},i_{k+1},\cdots,i_m=1}^{n_1,\cdots,n_{k-1},n_{k+1},\cdots,n_m} \mathcal{T}_{i_1\cdots i_k\cdots  i_m}^* x_{i_1}^{(1)}\cdots x_{i_{k-1}}^{(k-1)}x_{i_{k+1}}^{(k+1)} \cdots x_{i_m}^{(m)}.
$$
Finally, we define a new complex vector $\langle \otimes_{i=1,i\not=k}^m \mathbf{x}^{(i)},\mathcal{T}\rangle:=\langle \mathcal{T},\otimes_{i=1,i\not=k}^m \mathbf{x}^{(i)}\rangle^* $.
The definition of U-eigenpair of a complex tensor $\mathcal{T}$, introduced in \cite{NQB14}, is   given below.

\begin{definition}(\cite{NQB14})
For an $m$th-order tensor $\mathcal{T}\in H$, a tuple \\$\{\lambda,(\mathbf{x}^{(1)}, \cdots, \mathbf{x}^{(m)})\}$ with $\lambda\in\mathbb{R},  \mathbf{x}^{(i)}\in \mathbb{C}^{n_i},  i=1,\cdots,m$ is called a U-eigenpair of $\mathcal{T}$ if $\lambda$ and the rank-one tensor $\otimes_{i=1}^m \mathbf{x}^{(i)}$ are solutions of the following system of equations:
\begin{equation}\label{EQ:Ueigen}
\left\{
  \begin{array}{ll}
    \langle \mathcal{T},\otimes_{i=1,i\not=k}^m \mathbf{x}^{(i)}\rangle=\lambda {\mathbf{x}^{(k)}}^*, \\
    \langle \otimes_{i=1,i\not=k}^m \mathbf{x}^{(i)},\mathcal{T}\rangle=\lambda {\mathbf{x}^{(k)}}, \\
    \lambda\in \mathbb{R}, ||\mathbf{x}^{(i)}||=1,i=1,2,\cdots, m,
  \end{array}
\right.\ k=1,2,\cdots, m.
\end{equation}
\end{definition}
 In fact, (\ref{EQ:Ueigen}) is equivalent to
\begin{equation}\label{Ueigen1}
\left\{
  \begin{array}{ll}
    \langle \mathcal{T},\otimes_{i=1,i\not=k}^m \mathbf{x}^{(i)}\rangle=\lambda {\mathbf{x}^{(k)}}^*, \\
    \lambda\in \mathbb{R}, ||\mathbf{x}^{(i)}||=1,\ i=1,2,\cdots,m,
  \end{array}
\right.\ k=1,2,\cdots, m,
\end{equation}
or
\begin{equation}\label{Ueigen}
\left\{
  \begin{array}{ll}
    \langle \otimes_{i=1,i\not=k}^m \mathbf{x}^{(i)},\mathcal{T}\rangle=\lambda {\mathbf{x}^{(k)}},\\
    \lambda\in \mathbb{R}, ||\mathbf{x}^{(i)}||=1,\ i=1,2,\cdots,m,
  \end{array}
\right.\ k=1,2,\cdots, m.
\end{equation}

When $\mathcal{S}$ is a symmetric tensor, we call the tuple $\{\lambda,\mathbf{x}\}$ a US-eigenpair of $\mathcal{S}$ if the scalar $\lambda$ and the vector $\mathbf{x}$ satisfy
\begin{equation}\label{EQ:USeigen0}
\left\{
  \begin{array}{ll}
    \langle \mathcal{S},\otimes_{i=1}^{m-1} \mathbf{x}\rangle=\lambda \mathbf{x}^*, \\
    \lambda\in \mathbb{R}, \mathbf{x}\in \mathbb{C}^{n}, ||\mathbf{x}||=1.
  \end{array}
\right.
\end{equation}
or
\begin{equation}\label{EQ:USeigen0}
\left\{
  \begin{array}{ll}
    \langle \otimes_{i=1}^{m-1} \mathbf{x},\mathcal{S}\rangle=\lambda \mathbf{x}, \\
    \lambda\in \mathbb{R}, \mathbf{x}\in \mathbb{C}^{n}, ||\mathbf{x}||=1.
  \end{array}
\right.
\end{equation}
The relationships between the US-eigenvalue and other definitions of tensor eigenvalues are discussed in \cite{JLZ16}.

Given a tensor $\mathcal{T}\in H$, A rank-one complex tensor $\otimes_{i=1}^m \mathbf{x}^{(i)}$ is called the best complex rank-one approximation to $\mathcal{T}$ if it is the minimizer of the optimization problem
 \begin{equation}\label{best appro1}
 \min_{\mathbf{x}^{(i)}\in\mathbb{C}^{n_i},||\mathbf{x}^{(i)}||=1}||\mathcal{T}-\otimes^m_{i=1} \mathbf{x}^{(i)}||.
 \end{equation}
It is proved in \cite{NQB14,HKWGG2009} that for a symmetric $m$th-order complex tensor $\mathcal{S}$, its best symmetric complex rank-one approximation is also the best complex rank-one approximation, in other words, for a symmetric complex tensor   $\mathcal{S}$, the optimization problem (\ref{best appro1}) reduces to the following one
 \begin{equation}\label{best appro2}
  \min_{\mathbf{x}\in\mathbb{C}^n,||\mathbf{x}||=1}||\mathcal{S}-\otimes^m_{i=1} \mathbf{x}||.
 \end{equation}
As pointed out by \cite{NZZ2017}, (\ref{best appro1}) is equivalent to the following maximization problem
\begin{equation}\label{max appro1}
\left\{
  \begin{array}{ll}
   \max ~~|\langle \mathcal{T},\otimes^m_{i=1} \mathbf{x}^{(i)}\rangle| \\
   s.t.~~||\mathbf{x}^{(i)}||=1, ~~\mathbf{x}^{(i)}\in\mathbb{C}^{n_i}, i=1,\ldots, m.
  \end{array}
\right.
\end{equation}
Moreover, let $ (\mathbf{z}^{(1)},\cdots, \mathbf{z}^{(m)})$ be a solution to (\ref{max appro1}).  It is shown in \cite{NZZ2017} that the largest U-eigenvalue of the tensor $\mathcal{T}$ is actually $\max |\langle \mathcal{T},\otimes^m_{i=1} \mathbf{z}^{(i)}\rangle|$ and  $ (\mathbf{z}^{(1)},\cdots, \mathbf{z}^{(m)})$ are the corresponding U-eigenvectors.  Also, the rank-1 tensor $ \otimes_{i=1}^m \mathbf{z}^{(i)}$ is the best complex rank-one approximation of $\mathcal{T}$.

The following result has been proved in \cite{NZZ2017}.
\begin{theorem}\label{TH:Glambdamax} Assume that $\mathcal{T}$ is the corresponding tensor of a multipartite pure state $|\psi\rangle$ under an orthonormal basis as in (\ref{state1}). Let $\lambda_{max}$ be the largest U-eigenvalue of $\mathcal{T}$. Then

(1) $G(|\psi\rangle)=\lambda_{max}$,

(2) $ E_G(|\psi\rangle)=\sqrt{2-2\lambda_{max}}.$
\end{theorem}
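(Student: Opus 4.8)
The plan is to deduce both parts from the definitions together with the known identification of the largest U-eigenvalue with the optimal value of the best complex rank-one approximation problem, which has already been recalled just before the statement.

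First I would re-encode the maximal-overlap problem as a tensor optimization. By definition a normalized separable state has the form $|\phi\rangle=\otimes_{k=1}^m|\phi^{(k)}\rangle$ with $|\phi^{(k)}\rangle\in\mathbb{C}^{n_k}$; writing $\mathbf{x}^{(k)}$ for the coordinate vector of $|\phi^{(k)}\rangle$ in the fixed orthonormal basis, the normalization condition is equivalent to $\|\mathbf{x}^{(k)}\|=1$ for every $k$ (the freedom of redistributing unit-modulus scalars among the factors being harmless), and since $\mathcal{T}$ is precisely the coordinate tensor of $|\psi\rangle$ as in (\ref{state1}) one gets $\langle\psi|\phi\rangle=\langle\mathcal{T},\otimes_{k=1}^m\mathbf{x}^{(k)}\rangle$. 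Taking the supremum over $|\phi\rangle\in Separ(H)$, which is attained since the objective is continuous on a product of spheres, shows that $G(|\psi\rangle)$ equals the optimal value of the maximization problem (\ref{max appro1}).

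Next I would show that this optimal value equals $\lambda_{max}$, supplying the two inequalities for completeness. For $\lambda_{max}\le G(|\psi\rangle)$: contracting the first equation of (\ref{EQ:Ueigen}) with $\mathbf{x}^{(k)}$ and using $\|\mathbf{x}^{(k)}\|=1$ gives $\langle\mathcal{T},\otimes_{i=1}^m\mathbf{x}^{(i)}\rangle=\lambda$ for any U-eigenpair $\{\lambda,(\mathbf{x}^{(1)},\dots,\mathbf{x}^{(m)})\}$, hence $|\lambda|\le G(|\psi\rangle)$. For the reverse inequality: let $(\mathbf{z}^{(1)},\dots,\mathbf{z}^{(m)})$ attain the maximum in (\ref{max appro1}); after multiplying one factor by a unit-modulus scalar we may assume $r:=\langle\mathcal{T},\otimes_{i=1}^m\mathbf{z}^{(i)}\rangle=G(|\psi\rangle)\ge 0$, and the first-order optimality conditions for this spherical optimization with complex variables (in the spirit of \cite{NB16}) yield $\langle\mathcal{T},\otimes_{i=1,i\neq k}^m\mathbf{z}^{(i)}\rangle=\mu_k(\mathbf{z}^{(k)})^*$ with real multipliers $\mu_k$; contracting with $\mathbf{z}^{(k)}$ forces $\mu_k=r$ for all $k$, so $\{r,(\mathbf{z}^{(1)},\dots,\mathbf{z}^{(m)})\}$ is a U-eigenpair and $\lambda_{max}\ge r=G(|\psi\rangle)$. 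Combining the two estimates gives $\lambda_{max}=G(|\psi\rangle)$, which is part (1); part (2) then follows immediately by substituting $G(|\psi\rangle)=\lambda_{max}$ into the identity $E_G(|\psi\rangle)=\sqrt{2-2G(|\psi\rangle)}$ from (\ref{geometric measure1}).

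The main obstacle is the reverse inequality: one must confirm that a maximizer of (\ref{max appro1}) produces a genuine U-eigenpair, i.e. that the Lagrange multipliers arising from the complex unit-norm constraints can be taken to be real — so that $\lambda\in\mathbb{R}$ as demanded in (\ref{EQ:Ueigen}) — and that, after the phase normalization, the common eigenvalue is the nonnegative number $G(|\psi\rangle)$ itself rather than merely some quantity of that modulus. Everything else is a routine translation between quantum states and tensors together with an elementary index contraction.
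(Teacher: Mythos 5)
Your argument is correct. Note, though, that the paper does not actually prove Theorem \ref{TH:Glambdamax}: it is stated with the remark that it has been proved in \cite{NZZ2017}, and the key intermediate fact you re-derive --- that the largest U-eigenvalue equals the optimal value of (\ref{max appro1}) --- is already quoted from that reference in the paragraph immediately preceding the theorem. Your reconstruction (identifying $G(|\psi\rangle)$ with the optimum of (\ref{max appro1}) via the coordinate correspondence, bounding every U-eigenvalue by that optimum through the contraction $\langle \mathcal{T},\otimes_{i=1}^m \mathbf{x}^{(i)}\rangle=\lambda$, and showing that a phase-normalized maximizer satisfies the U-eigenpair equations with a real multiplier equal to $G(|\psi\rangle)$) is exactly the standard route taken in the cited sources, so it matches the intended proof; the only point deserving the care you already give it is the realness of the multipliers in the complex spherical first-order conditions.
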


This theorem makes it possible to investigate the  GME of a multipartite pure state by means of  the U-eigenpairs of the associated complex tensor.

\subsection{Tensor blocking}
We introduce how to block a complex tensor, in analogy to the real tensor case proposed in \cite{RV2013}. Let $\mathcal{T}\in\mathbb{C}^{n_1\times\cdot\cdot\cdot\times n_m}$ be an $m$th-order tensor. If $a$ and $b$ are integers with $a\leq b$, then let $a:b$ denote the row vector $[a, a+1, \cdots, b]$. Blocking the tensor $\mathcal{T}$ is the act of partitioning its index range vectors $1:n_1$, $1:n_2$, $\cdots$, $1:n_m$ in the following way. For each $k =1,\cdots, m$, let $\mathbf{r}^{(k)} \equiv 1:n_k $.  Partition $\mathbf{r}^{(k)}$ into $b_k$ blocks as
\begin{equation}\label{Seg}
\mathbf{r}^{(k)} = [\mathbf{r}_1^{(k)}, \cdots, \mathbf{r}_{b_k}^{(k)}].
\end{equation}
Let $\rho_{i}^{(k)}=\sum_{j=1}^{i-1} \mathrm{length}(\mathbf{r}_{j}^{(k)})$ for each $i=1,\cdots,b_k$.
Then $\mathcal{T}$ can be regarded as a $b_1\times\cdot\cdot\cdot\times b_m$ block tensor. Let $\mathbf{i}=\{i_1, \cdots, i_m\}$ where $1\leq i_k \leq b_k$. The $\mathbf{i}$-th block is denoted as
$$\mathcal{T}_{[\mathbf{i}]}=\mathcal{T}_{[i_1, \cdots, i_m]}.$$
To be specific, let $\mathbf{j}=\{j_1,\cdots,j_m\}$, where $  j_k=1,\cdots,\mathrm{length}(\mathbf{r}_{i_k}^{(k)})$. Then the $\mathbf{j}$-th entry of the subtensor $\mathcal{T}_{[\mathbf{i}]}$ is
\begin{equation}\label{Entry}
(\mathcal{T}_{[\mathbf{i}]})_{\mathbf{j}} =  (\mathcal{T}_{[\mathbf{i}]})_{j_1\cdots j_m} =\mathcal{T}_{(\rho_{i_1}^{(1)}+j_1) \cdots (\rho_{i_m}^{(m)}+j_m)}.
\end{equation}

\begin{definition}
  Let $\mathcal{T}\in\mathbb{C}^{n_1\times\cdot\cdot\cdot\times n_m}$ be an $m$th-order tensor, $\mathbf{p}=\{p_1,\cdots,p_m\}$ be a permutation of $1:m$. The $\mathbf{p}$-transpositional tensor of $\mathcal{T}$, denoted by $\mathcal{T}^{<\mathbf{p}>}$, is  defined as
  $$
  (\mathcal{T}^{<\mathbf{p}>})_{\mathbf{p(\mathbf{j})}}=\mathcal{T}_{\mathbf{j}},
  $$
where for each $\mathbf{j}=\{j_1,...,j_m\}$, $\mathbf{p(\mathbf{j})}=\{j_{p_1},...,j_{p_m}\}$ is a $\mathbf{p}$-transposition of $\mathbf{j}$.
\end{definition}

\begin{lemma} \label{lem:jan24}
Let $\mathcal{T}\in\mathbb{C}^{n_1\times\cdot\cdot\cdot\times n_m}$ be a $b_1\times\cdot\cdot\cdot\times b_m$ block tensor defined by the partition (\ref{Seg}). Let $\mathbf{p}=\{p_1,\cdots,p_m\}$ be a permutation of $1:m$, $\mathbf{i}=\{i_1,...,i_m\}$. Then $$(\mathcal{T}^{<\mathbf{p}>})_{[\mathbf{p(i)}]}=(\mathcal{T}_{[\mathbf{i}]})^{<\mathbf{p}>}.$$
\end{lemma}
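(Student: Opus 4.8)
The plan is to prove the identity entrywise, by expanding both sides using the definition of tensor blocking (formula (\ref{Entry})) together with the definition of the $\mathbf{p}$-transpositional tensor, and checking that the two resulting scalar expressions, written in terms of entries of $\mathcal{T}$, coincide.

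First I would set up notation. Let $\mathbf{q}=\{q_1,\dots,q_m\}$ denote the inverse of the permutation $\mathbf{p}$, so that $q_{p_t}=t$ and $p_{q_s}=s$ for all $t,s$. Since $\mathcal{T}$ has mode sizes $n_1,\dots,n_m$ and is blocked according to (\ref{Seg}), its transpositional tensor $\mathcal{T}^{<\mathbf{p}>}$ has mode sizes $n_{p_1},\dots,n_{p_m}$ and is naturally blocked by the permuted partition: in mode $\ell$ one uses $\mathbf{r}^{(p_\ell)}=[\mathbf{r}_1^{(p_\ell)},\dots,\mathbf{r}_{b_{p_\ell}}^{(p_\ell)}]$, so the block offsets in mode $\ell$ are the $\rho_i^{(p_\ell)}$ and $\mathcal{T}^{<\mathbf{p}>}$ is a $b_{p_1}\times\cdots\times b_{p_m}$ block tensor; in particular the block index $\mathbf{p(i)}$ appearing on the left-hand side is legitimate. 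One checks directly from this that both $(\mathcal{T}^{<\mathbf{p}>})_{[\mathbf{p(i)}]}$ and $(\mathcal{T}_{[\mathbf{i}]})^{<\mathbf{p}>}$ are subtensors whose mode-$\ell$ index range has length $\mathrm{length}(\mathbf{r}_{i_{p_\ell}}^{(p_\ell)})$, so it suffices to compare their entries at an arbitrary multi-index $\mathbf{j'}=\{j'_1,\dots,j'_m\}$ in that range.

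For the left-hand side I would apply (\ref{Entry}) to the block tensor $\mathcal{T}^{<\mathbf{p}>}$ to get $((\mathcal{T}^{<\mathbf{p}>})_{[\mathbf{p(i)}]})_{\mathbf{j'}}=(\mathcal{T}^{<\mathbf{p}>})_{(\rho_{i_{p_1}}^{(p_1)}+j'_1)\cdots(\rho_{i_{p_m}}^{(p_m)}+j'_m)}$, and then unwind the definition of $\mathcal{T}^{<\mathbf{p}>}$: solving $\mathbf{p(j)}=(\rho_{i_{p_1}}^{(p_1)}+j'_1,\dots,\rho_{i_{p_m}}^{(p_m)}+j'_m)$ for $\mathbf{j}$ gives $j_s=\rho_{i_s}^{(s)}+j'_{q_s}$, hence $((\mathcal{T}^{<\mathbf{p}>})_{[\mathbf{p(i)}]})_{\mathbf{j'}}=\mathcal{T}_{(\rho_{i_1}^{(1)}+j'_{q_1})\cdots(\rho_{i_m}^{(m)}+j'_{q_m})}$. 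For the right-hand side I would first use (\ref{Entry}) for $\mathcal{T}_{[\mathbf{i}]}$ and then apply the definition of the transpositional tensor to $\mathcal{T}_{[\mathbf{i}]}$: solving $\mathbf{p(j)}=\mathbf{j'}$ gives $j_s=j'_{q_s}$, so $((\mathcal{T}_{[\mathbf{i}]})^{<\mathbf{p}>})_{\mathbf{j'}}=(\mathcal{T}_{[\mathbf{i}]})_{(j'_{q_1},\dots,j'_{q_m})}=\mathcal{T}_{(\rho_{i_1}^{(1)}+j'_{q_1})\cdots(\rho_{i_m}^{(m)}+j'_{q_m})}$. The two scalars are identical, which establishes the lemma.

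The only genuinely delicate point---and where I would be most careful---is the bookkeeping of the permutation versus its inverse: it is tempting to conflate the equation $\mathbf{p(j)}=\mathbf{j'}$ with $\mathbf{j}=\mathbf{p(j')}$. Introducing the explicit inverse $\mathbf{q}$ and verifying the substitution $j_s=j'_{q_s}$ once, carefully, neutralizes this hazard, after which both computations above are routine. A secondary matter worth recording explicitly is that ``blocking $\mathcal{T}^{<\mathbf{p}>}$'' is understood with respect to the permuted partition; once this convention is fixed the two sides have compatible block structures and the entrywise comparison goes through verbatim.
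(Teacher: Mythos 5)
Your proof is correct: the entrywise computation of both sides via (\ref{Entry}) and the definition of the transpositional tensor is carried out accurately, and your care with the inverse permutation $\mathbf{q}$ (obtaining $j_s=\rho_{i_s}^{(s)}+j'_{q_s}$ on the left and $j_s=j'_{q_s}$ on the right) is exactly where the bookkeeping matters. The paper itself omits the proof, deferring to Lemma 2.1 of Ragnarsson and Van Loan, whose argument for real tensors is the same direct index verification you give, so your approach coincides with the intended one.
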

The proof of Lemma \ref{lem:jan24} is essentially the same as that of $\emph{Lemma 2.1}$ in \cite{RV2013}, hence it is omitted.

\section{Iterative methods for computing U-eigenpairs of non-symmetric  complex tensors}
In this section, we first introduce how to embed a non-symmetric complex tensor $\mathcal{A}$ into a symmetric complex tensor $\mathcal{S}$, and illustrate the relationship between the U-eigenpairs of $\mathcal{A}$ and the US-eigenpairs of $\mathcal{S}$. Then we propose three iterative algorithms to compute the eigenpairs of a non-symmetric tensor.

\subsection{The extended embedding operation}

In this subsection, we present the relationship between the U-eigenpairs of a non-symmetric complex tensor $\mathcal{A}$ and the US-eigenpairs of its symmetric embedding $\mathcal{S}$. We begin with the following definition.

\begin{definition}\label{def:embebdding}
Let $\mathcal{A}\in\mathbb{C}^{n_1\times\cdot\cdot\cdot\times n_m}$, $\mathbf{i}=\{i_1,...,i_m\}$. The $\mathbf{sym}(\cdot)$ operator means to construct an $n\times\cdot\cdot\cdot\times n$ block tensor denoted by $\mathcal{S}=\mathbf{sym}(\mathcal{A})\in \mathbb{C}^{n\times\cdots \times n}$, $n=n_1+\cdots+n_m$, and the $\mathbf{i}$-th block  $\mathcal{S}_{[\mathbf{i}]}\in\mathbb{C}^{n_{i_1}\times\cdot\cdot\cdot\times n_{i_m}}$ is defined as
$$\mathcal{S}_{[\mathbf{i}]}=\left\{
  \begin{array}{ll}
  \mathcal{A}^{<\mathbf{i}>},~~~~if~ \mathbf{i}~ is ~a ~permutation~ of ~$1:m$,\\
  ~~0,~~~~~~~else.
  \end{array}
\right.$$
$\mathcal{S}=\mathbf{sym}(\mathcal{A})$ is called the symmetric embedding of $\mathcal{A}$.
\end{definition}

\begin{theorem}
Given a tensor $\mathcal{A}\in\mathbb{C}^{n_1\times\cdot\cdot\cdot\times n_m}$, let  $\mathcal{S}=\mathbf{sym}(\mathcal{A})\in \mathbb{C}^{n\times\cdots\times n}$ be its symmetric embedding as given in Definition \ref{def:embebdding}. Then $\mathcal{S}$ is symmetric.
\end{theorem}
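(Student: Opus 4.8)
The plan is to show directly from the definition that permuting the indices of $\mathcal{S}$ by an arbitrary permutation $\mathbf{q}$ of $1:m$ leaves $\mathcal{S}$ unchanged, i.e. $\mathcal{S}^{<\mathbf{q}>}=\mathcal{S}$. Since $\mathcal{S}$ is presented as an $n\times\cdots\times n$ block tensor under the partition $\mathbf{r}^{(k)}=1:n_k$ (so $b_k=m$ for every $k$ and each index block is one of the ranges $1:n_1,\ldots,1:n_m$), it suffices to check the block-level identity $(\mathcal{S}^{<\mathbf{q}>})_{[\mathbf{q(i)}]}=\mathcal{S}_{[\mathbf{q(i)}]}$ for every multi-index $\mathbf{i}=\{i_1,\ldots,i_m\}$ with $1\le i_k\le m$. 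By Lemma \ref{lem:jan24} the left-hand side equals $(\mathcal{S}_{[\mathbf{i}]})^{<\mathbf{q}>}$, so the whole statement reduces to verifying
$$(\mathcal{S}_{[\mathbf{i}]})^{<\mathbf{q}>}=\mathcal{S}_{[\mathbf{q(i)}]}\qquad\text{for all }\mathbf{i}\text{ and all permutations }\mathbf{q}.$$

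First I would dispose of the zero case: if $\mathbf{i}$ is not a permutation of $1:m$, then neither is $\mathbf{q(i)}$ (applying a permutation to the entries of a list does not change which values occur), so by Definition \ref{def:embebdding} both $\mathcal{S}_{[\mathbf{i}]}$ and $\mathcal{S}_{[\mathbf{q(i)}]}$ are the zero tensor, and the transpositional tensor of the zero tensor is the zero tensor; the identity holds trivially. The substantive case is when $\mathbf{i}$ is a permutation of $1:m$. Then $\mathcal{S}_{[\mathbf{i}]}=\mathcal{A}^{<\mathbf{i}>}$ and $\mathcal{S}_{[\mathbf{q(i)}]}=\mathcal{A}^{<\mathbf{q(i)}>}$, and I must show $(\mathcal{A}^{<\mathbf{i}>})^{<\mathbf{q}>}=\mathcal{A}^{<\mathbf{q(i)}>}$. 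This is a composition-of-permutations fact for the $<\cdot>$ operation: I would expand both sides entrywise using $(\mathcal{T}^{<\mathbf{p}>})_{\mathbf{p(j)}}=\mathcal{T}_{\mathbf{j}}$, track how the index tuple is relabelled under successive transpositions, and confirm that the net relabelling on the left matches $\mathbf{q(i)}$ on the right — essentially the identity $(\mathcal{T}^{<\mathbf{p}>})^{<\mathbf{q}>}=\mathcal{T}^{<\mathbf{p}\circ\mathbf{q}>}$ (with whatever composition convention the paper's $\mathbf{p(j)}$ notation fixes), specialized to $\mathbf{p}=\mathbf{i}$. One also checks that the block shapes are compatible: $\mathcal{S}_{[\mathbf{i}]}\in\mathbb{C}^{n_{i_1}\times\cdots\times n_{i_m}}$, and applying $<\mathbf{q}>$ permutes these dimensions to $n_{i_{q_1}}\times\cdots\times n_{i_{q_m}}$, which is exactly the declared shape of $\mathcal{S}_{[\mathbf{q(i)}]}$, so the equality is between tensors of the same size.

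I expect the main obstacle to be purely bookkeeping: getting the order of composition right when $<\mathbf{q}>$ is applied after $<\mathbf{i}>$, since the definition writes the permuted index as $\mathbf{p(j)}=\{j_{p_1},\ldots,j_{p_m}\}$ and composing two such relabellings requires care about whether indices or positions are being permuted. Once the identity $(\mathcal{A}^{<\mathbf{i}>})^{<\mathbf{q}>}=\mathcal{A}^{<\mathbf{q(i)}>}$ is pinned down, combining it with the zero case and Lemma \ref{lem:jan24} gives $(\mathcal{S}^{<\mathbf{q}>})_{[\mathbf{q(i)}]}=\mathcal{S}_{[\mathbf{q(i)}]}$ for all $\mathbf{i}$; as $\mathbf{q}$ is a bijection, $\mathbf{q(i)}$ ranges over all block multi-indices, so $\mathcal{S}^{<\mathbf{q}>}=\mathcal{S}$ at the block level, hence entrywise. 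Since $\mathbf{q}$ was an arbitrary permutation of $1:m$, $\mathcal{S}$ is symmetric. \qed
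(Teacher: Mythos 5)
Your proposal is correct and takes essentially the approach the paper intends: the paper omits this proof entirely, deferring to Lemma 2.2 of Ragnarsson--Van Loan, and your argument (reduce $\mathcal{S}^{<\mathbf{q}>}=\mathcal{S}$ to the block identity via Lemma \ref{lem:jan24}, dispose of the zero blocks, and verify the composition rule $(\mathcal{A}^{<\mathbf{i}>})^{<\mathbf{q}>}=\mathcal{A}^{<\mathbf{q(i)}>}$) is precisely the complex-tensor transcription of that referenced proof. The composition identity does come out as you expect under the paper's convention, since $\mathbf{q}(\mathbf{p}(\mathbf{j}))=\{j_{p_{q_1}},\ldots,j_{p_{q_m}}\}=(\mathbf{q}(\mathbf{p}))(\mathbf{j})$, so the bookkeeping you flag as the main obstacle works out.
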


\begin{proof} The proof is similar to $\emph{Lemma 2.2}$ in \cite{RV2013}, hence it is omitted.
\end{proof}

We use a simple example to demonstrate symmetric embedding. Let $\mathcal{A}\in\mathbb{C}^{3\times4\times 5}$. Then  $\mathcal{S}=\mathbf{sym}(\mathcal{A})\in\mathbb{C}^{12\times12\times12}$. Moreover,
$$ \mathcal{S}_{[ijk]}=0\in\mathbb{C}^{n_i\times n_j\times n_k},\ \mathrm{for\ all \ } i,j,k\in\{1,2,3\},\ \mathrm{\ if\ } i=j \mathrm{\ or\ } i=k \mathrm{\ or\ } j=k, $$
and
$$\mathcal{S}_{[1 2 3]}=\mathcal{A}^{<1 2 3>}, \ \mathcal{S}_{[1 3 2]}=\mathcal{A}^{<1 3 2>}, \ \mathcal{S}_{[2 1 3]}=\mathcal{A}^{<2 1 3>}, $$$$ \mathcal{S}_{[2 3 1]}=\mathcal{A}^{<2 3 1>},\ \mathcal{S}_{[3 1 2]}=\mathcal{A}^{<3 1 2>}, \ \mathcal{S}_{[3 2 1]}=\mathcal{A}^{<3 2 1>}.$$

The following theorem illustrates the relationship between the U-eigenpairs of a complex non-symmetric tensor $\mathcal{A}$ and the US-eigenpairs of its symmetric embedding $\mathcal{S}=\mathbf{sym}(\mathcal{A})$.
\begin{theorem}\label{Th:eigenpair}
  Let $\mathcal{A}\in\mathbb{C}^{n_1\times\cdots\times n_m}$, $\mathcal{S}=\mathbf{sym}(\mathcal{A})$, $n=n_1+\cdots+ n_m$. Assume that $ \lambda_\mathcal{S}$ is a nonzero US-eigenvalue of $\mathcal{S}$ associated with the US-eigenvector $\mathbf{x}\in\mathbb{C}^{n}$. We partition $\mathbf{x}$ as $\mathbf{x}=(\mathbf{x}^{(1)\top},...,\mathbf{x}^{(m)\top})^\top$, $\mathbf{x}^{(i)}\in\mathbb{C}^{n_i}$ for all $i=1:m$. Then the following hold:\\
  (a) For $i=1,\cdots, m$, $\|\mathbf{x}^{(i)}\|=\frac{1}{\sqrt{m}}$, i.e., all $\mathbf{x}^{(i)}$ have the same norm $\frac{1}{\sqrt{m}}$.\\
  (b) Let $\lambda_\mathcal{A}=\frac{({\sqrt{m}})^m}{m!}\lambda_\mathcal{S}$. Then $\lambda_\mathcal{A}$ is a U-eigenvalue of $\mathcal{A}$ associated with the U-eigenvector $\{\sqrt{m}\mathbf{x}^{(1)},\ \cdots,\ \sqrt{m}\mathbf{x}^{(m)}\}$.
\end{theorem}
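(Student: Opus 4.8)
The plan is to rewrite the defining equation of the US-eigenpair $\{\lambda_\mathcal{S},\mathbf{x}\}$ block by block, recognise it as the system of U-eigenpair equations for $\mathcal{A}$ up to scalars, and then read off (a) and (b) from that reformulation.

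First I would unpack $\langle\mathcal{S},\otimes_{i=1}^{m-1}\mathbf{x}\rangle=\lambda_\mathcal{S}\,\mathbf{x}^{*}$ using the decomposition of the index range $1:n$ into $m$ consecutive blocks of lengths $n_1,\dots,n_m$ --- the very partition \eqref{Seg} used in Definition \ref{def:embebdding}. Restricting both sides to the $k$-th block makes the right-hand side $\lambda_\mathcal{S}\,{\mathbf{x}^{(k)}}^{*}$, while on the left the free mode runs over block $k$ and the $m-1$ contracted modes split into a sum over block choices $(i_2,\dots,i_m)\in\{1,\dots,m\}^{m-1}$ and over within-block indices. By Definition \ref{def:embebdding} the block $\mathcal{S}_{[k,i_2,\dots,i_m]}$ vanishes unless $(k,i_2,\dots,i_m)$ is a permutation of $1:m$, so only the $(m-1)!$ terms with $(i_2,\dots,i_m)$ a permutation of $\{1,\dots,m\}\setminus\{k\}$ remain, and on each of them $\mathcal{S}_{[k,i_2,\dots,i_m]}=\mathcal{A}^{<(k,i_2,\dots,i_m)>}$ by \eqref{Entry}. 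Applying the definition of the transpositional tensor to undo the permutation --- the within-block index paired with $\mathbf{x}^{(i_t)}$ is precisely the index of mode $i_t$ of $\mathcal{A}$, so that mode $s\neq k$ of $\mathcal{A}$ ends up contracted against $\mathbf{x}^{(s)}$ --- each of these $(m-1)!$ terms collapses to the \emph{same} vector $\langle\mathcal{A},\otimes_{s\neq k}\mathbf{x}^{(s)}\rangle$, independent of the chosen permutation. This yields the identity
\[
(m-1)!\,\langle\mathcal{A},\otimes_{s\neq k}\mathbf{x}^{(s)}\rangle=\lambda_\mathcal{S}\,{\mathbf{x}^{(k)}}^{*},\qquad k=1,\dots,m. \qquad (\star)
\]
Lemma \ref{lem:jan24} is the clean tool for making this transposition/blocking bookkeeping fully rigorous.

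For (a) I would contract $(\star)$ once more with $\mathbf{x}^{(k)}$: the left-hand side becomes $(m-1)!\,\langle\mathcal{A},\otimes_{s=1}^{m}\mathbf{x}^{(s)}\rangle$, which is independent of $k$, and the right-hand side becomes $\lambda_\mathcal{S}\,\|\mathbf{x}^{(k)}\|^{2}$. Since $\lambda_\mathcal{S}\neq 0$, all $\|\mathbf{x}^{(k)}\|$ are equal; combined with $\sum_{k=1}^{m}\|\mathbf{x}^{(k)}\|^{2}=\|\mathbf{x}\|^{2}=1$ this forces $\|\mathbf{x}^{(i)}\|=1/\sqrt{m}$ for every $i$. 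For (b), put $\mathbf{y}^{(i)}=\sqrt{m}\,\mathbf{x}^{(i)}$, so $\|\mathbf{y}^{(i)}\|=1$. Substituting $\mathbf{x}^{(s)}=\mathbf{y}^{(s)}/\sqrt{m}$ in $(\star)$, pulling the scalar $(\sqrt{m})^{-(m-1)}$ out of the $(m-1)$-linear map $\langle\mathcal{A},\cdot\rangle$, and using ${\mathbf{x}^{(k)}}^{*}={\mathbf{y}^{(k)}}^{*}/\sqrt{m}$, I get $\langle\mathcal{A},\otimes_{s\neq k}\mathbf{y}^{(s)}\rangle=\frac{(\sqrt{m})^{m-2}}{(m-1)!}\,\lambda_\mathcal{S}\,{\mathbf{y}^{(k)}}^{*}$; since $\frac{(\sqrt{m})^{m-2}}{(m-1)!}=\frac{(\sqrt{m})^{m}}{m!}$, the scalar is exactly $\lambda_\mathcal{A}$. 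As $\lambda_\mathcal{A}\in\mathbb{R}$ and $\|\mathbf{y}^{(i)}\|=1$, the tuple $\{\lambda_\mathcal{A},(\mathbf{y}^{(1)},\dots,\mathbf{y}^{(m)})\}$ satisfies \eqref{Ueigen1} for all $k$, hence (being equivalent to \eqref{EQ:Ueigen}) is a U-eigenpair of $\mathcal{A}$.

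I expect the only genuine obstacle to be the first step: one must keep track simultaneously of the block decomposition of $1:n$, the support condition on the blocks of $\mathcal{S}$ coming from Definition \ref{def:embebdding}, and the index relabeling induced by $\mathcal{A}^{<\mathbf{i}>}$, and then verify that after relabeling every one of the $(m-1)!$ surviving permutations contributes the identical vector $\langle\mathcal{A},\otimes_{s\neq k}\mathbf{x}^{(s)}\rangle$ (the combinatorial factor $(m-1)!$ being exactly what converts $\lambda_\mathcal{S}$ into $\lambda_\mathcal{A}$). Once $(\star)$ is established, parts (a) and (b) are just the two short computations above.
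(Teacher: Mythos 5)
Your proposal is correct and follows essentially the same route as the paper: both derive the key block identity $(m-1)!\,\langle\mathcal{A},\otimes_{s\neq k}\mathbf{x}^{(s)}\rangle=\lambda_\mathcal{S}{\mathbf{x}^{(k)}}^{*}$ from Definition \ref{def:embebdding} by counting the $(m-1)!$ nonzero permutation blocks, then contract with $\mathbf{x}^{(k)}$ for (a) and rescale by $\sqrt{m}$ for (b). The only cosmetic difference is in finishing (a): the paper compares against $\lambda_\mathcal{S}=\langle\mathcal{S},\mathbf{x}^m\rangle=m!\langle\mathcal{A},\mathbf{x}^{(1)}\cdots\mathbf{x}^{(m)}\rangle$ to get $\|\mathbf{x}^{(k)}\|^2=1/m$ directly, whereas you note the contracted left-hand side is independent of $k$ and invoke $\|\mathbf{x}\|=1$ --- the two are trivially equivalent.
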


\begin{proof}
(a) Since $ \lambda_\mathcal{S}$ is a US-eigenvalue of $\mathcal{S}$ associated with the US-eigenvector $\mathbf{x}\in\mathbb{C}^{n}$, then we have
  \begin{equation}\label{Eq:SEigenPair}
   \langle \mathcal{S}, \mathbf{x}^{m-1}\rangle_{[i]}= \lambda_{\mathcal{S}}\mathbf{x}^{(i)*}, \ \ \ i=1,\cdots,m.
  \end{equation}
By the definition of the inner product of complex tensors, we have
\begin{eqnarray}
\nonumber \langle \mathcal{S}, \mathbf{x}^{m-1}\rangle_{[i]} &=&\sum_{i_2,\cdots,i_m
  =1}^{m}  \langle\mathcal{S}_{[i i_2\cdots  i_m]}, \mathbf{x}^{(i_2)}\cdots \mathbf{x}^{(i_m)}\rangle\\
 \nonumber &=& \sum_{[i_2,\cdots,i_m]\in \mathbf{p}(1,2,...,i-1,i+1,...,m)}  \langle\mathcal{S}_{[i i_2\cdots  i_m]}, \mathbf{x}^{(i_2)}\cdots \mathbf{x}^{(i_m)}\rangle\\
 \nonumber &=&\sum_{[i_2,\cdots,i_m]\in \mathbf{p}(1,2,...,i-1,i+1,...,m)} \langle\mathcal{A}^{<i i_2\cdots i_m>}, \mathbf{x}^{(i_2)}\cdots \mathbf{x}^{(i_m)}\rangle\\
\label{eq1}   &=&(m-1)! \langle\mathcal{A}, \mathbf{x}^{(1)}\cdots \mathbf{x}^{(i-1)}\mathbf{x}^{(i+1)} \cdots \mathbf{x}^{(m)}\rangle
  \end{eqnarray}

Comparing the right-hand sides of (\ref{Eq:SEigenPair}) and (\ref{eq1}), we have that

\begin{equation}\label{eq2}
  (m-1)! \langle\mathcal{A}, \mathbf{x}^{(1)}\cdots \mathbf{x}^{(i-1)}\mathbf{x}^{(i+1)} \cdots \mathbf{x}^{(m)}\rangle=\lambda_{\mathcal{S}}\mathbf{x}^{(i)*}.
\end{equation}

It follows that
\begin{equation}\label{eq3}
    (m-1)!\langle \mathcal{A},\mathbf{x}^{(1)}\cdots\mathbf{x}^{(m)}\rangle=\lambda_\mathcal{S} \langle\mathbf{x}^{(i)*}, \mathbf{x}^{(i)}\rangle.
\end{equation}

On the other hand, there is
\begin{equation}\label{eq4}
\lambda_\mathcal{S}=\langle \mathcal{S}, \mathbf{x}^m\rangle=\sum^{m}_{i_1,\cdots,i_m=1}\langle \mathcal{S}_{[i_1,\cdots, i_m]},\mathbf{x}^{(i_1)}\cdots \mathbf{x}^{(i_m)}\rangle=m!\langle \mathcal{A},\mathbf{x}^{(1)}\cdots\mathbf{x}^{(m)}\rangle.
\end{equation}

Since $\lambda_\mathcal{S}\not=0$, by (\ref{eq3}) and (\ref{eq4}), we have that
\begin{equation}\label{eq5}
\langle\mathbf{x}^{(i)*}, \mathbf{x}^{(i)}\rangle=\frac{1}{m},~~~ i.e.~~~ \|\mathbf{x}^{(i)}\|=\frac{1}{\sqrt{m}}
\end{equation}
for all $i=1:m$.

(b) According to (\ref{eq2}) and (\ref{eq5}), we have $ \|\sqrt{m}\mathbf{x}^{(i)}\|=1$, and
\begin{equation}\label{1}
 \langle\mathcal{A}, (\sqrt{m}\mathbf{x}^{(1)})\cdots (\sqrt{m}\mathbf{x}^{(i-1)})(\sqrt{m}\mathbf{x}^{(i+1)}) \cdots (\sqrt{m}\mathbf{x}^{(m)})\rangle
  = \frac{({\sqrt{m}})^m\lambda_\mathcal{S}}{m!}\sqrt{m}\mathbf{x}^{(i)*}.
\end{equation}
By the definition of the U-eigenvalue of complex tensors, it follows that
\begin{equation}\label{eq7}
    \lambda_\mathcal{A}=\frac{({\sqrt{m}})^m}{m!}\lambda_\mathcal{S}
\end{equation}
is a U-eigenvalue of $\mathcal{A}$ associated with the eigenvectors $\{\sqrt{m}\mathbf{x}^{(1)},\ \cdots,\ \sqrt{m}\mathbf{x}^{(m)}\}$. This completes the proof. \ \ $\Box$
\end{proof}

\subsection{Iterative methods}
When an $n_1\times\cdots\times n_m$ tensor $\mathcal{A}$ is non-symmetric, we can use $\emph{Algorithm 4.1}$ in \cite{NB16} to compute the US-eigenpairs of its symmetric embedding $\mathcal{S}=\mathbf{sym}(\mathcal{A})$, and obtain the U-eigenpairs of the non-symmetric complex tensor $\mathcal{A}$ through $\emph{Theorem \ref{Th:eigenpair}}$. Hence, we have the following algorithm.

\begin{algo}\label{algo:1}
Computing the U-eigenpairs of an $n_1\times\cdots\times n_m$ non-symmetric complex tensor $\mathcal{A}$.

{\bf Step 1 (Initial step)}:
Let $\mathcal{S}=\mathbf{sym}(\mathcal{A})$, and $n=n_1+\cdots+n_m$. Choose a starting point $\mathbf{x}_0\in\mathbb{C}^n$ with $\|\mathbf{x}_0\|=1$, and $0<\alpha_\mathcal{S}\in \mathbb{R}$. Let $\lambda_0=\mathcal{S}^*\mathbf{x}_0^m$.

{\bf Step 2 (Iterating step)}:

\hskip 0.5 in {\bf for} $k=1,2,\cdots$, {\bf do}
\begin{eqnarray*}\label{eqalgorithm1.1}
\label{eqalgorithm1.1} \hat{\mathbf{x}}_k&=&\lambda_{{k-1}}\mathcal{S}\mathbf{x}^{*m-1}_{k-1}+\alpha_\mathcal{S}\mathbf{x}_{k-1},\\
\label{eqalgorithm1.2} \mathbf{x}_k &=&\hat{\mathbf{x}}_k/\|\hat{\mathbf{x}}_k\|,\\
\label{eqalgorithm1.3} \lambda_k &=&\mathcal{S}^*\mathbf{x}_k^m.
\end{eqnarray*}

\hskip 0.5 in {\bf end for.}

{\bf return}:

\hskip 0.5 in US-eigenvalue $\lambda_\mathcal{S}=|\lambda_k| $, US-eigenvector $\mathbf{x}=(\frac{\lambda_\mathcal{S}}{\lambda_k})^{1/m}\mathbf{x}_k$.

\hskip 0.5 in Let $\mathbf{x}=(\mathbf{x}^{(1)\top},...,\mathbf{x}^{(m)\top})^\top$, $\mathbf{x}^{(i)}\in \mathbb{C}^{n_i},$ for all $i=1:m$.

\hskip 0.5 in U-eigenvalue $\lambda_\mathcal{A}=\frac{({\sqrt{m}})^m}{m!}\lambda_\mathcal{S}$.

\hskip 0.5 in U-eigenvector $\{\sqrt{m}\mathbf{x}^{(1)},\ \cdots,\ \sqrt{m}\mathbf{x}^{(m)}\}$.
\end{algo}

Given a non-symmetric tensor $\mathcal{A}$, the size of the symmetric embedding tensor $\mathcal{S}=\mathbf{sym}(\mathcal{A})$ is much larger than that of the tensor $\mathcal{A}$ itself. This affects the computational efficiency of Algorithm \ref{algo:1} proposed above. Motivated by this,  we propose a new iterative algorithm.

\begin{algo}\label{algo:2}
Computing the U-eigenpairs of an $n_1\times\cdots\times n_m$ non-symmetric complex tensor $\mathcal{A}$.

{\bf Step 1 (Initial step)}:
Choose starting points $\hat{\mathbf{x}}_0^{(i)}\in\mathbb{C}^{n_i}$ with $||\hat{\mathbf{x}}_0^{(i)}||\not=0$ for all $i=1:m$.
Let $\mathbf{x}_0^{(i)}=\hat{\mathbf{x}}_0^{(i)}/\sqrt{\sum_{j=1}^m ||\hat{\mathbf{x}}_0^{(j)}||^2}$ for all $i=1:m$,
$\lambda_{0}=\langle\mathcal{A}, \mathbf{x}^{(1)}_0\cdots\mathbf{x}^{(m)}_0\rangle$. Choose $0<\alpha_{\mathcal{A}}\in \mathbb{R}$.

{\bf Step 2 (Iterating step)}:

\hskip 0.4 in {\bf for} $k=1,2,\cdots$, {\bf do}

\hskip 0.5 in {\bf for} $i=1,2,\cdots, m$, {\bf do}

\centerline{ $ \hskip 0.5 in \hat{\mathbf{x}}^{(i)}_k = \lambda_{k-1}\mathcal{A}\mathbf{x}^{(1)*}_{k-1}\cdots\mathbf{x}^{(i-1)*}_{k-1}\mathbf{x}^{(i+1)*}_{k-1}\cdots\mathbf{x}^{(m)*}_{k-1}+  \alpha_{\mathcal{A}}\mathbf{x}^{(i)}_{k-1} $.}

\hskip 0.5 in {\bf end for.}

\hskip 0.5 in {\bf for} $i=1,2,\cdots, m$, {\bf do}

$$ \mathbf{x}_k^{(i)} = \frac{\hat{\mathbf{x}}_k^{(i)}}{\sqrt{\sum_{j=1}^m ||\hat{\mathbf{x}}_k^{(j)}||^2}}. 
$$

\hskip 0.5 in {\bf end for.}

\centerline{ $\lambda_k = \langle\mathcal{A}, \mathbf{x}^{(1)}_k\cdots\mathbf{x}^{(m)}_k\rangle  $.}

\hskip 0.4 in{\bf end for.}

{\bf return}:

\hskip 0.5 in U-eigenvalue $\lambda_\mathcal{A}=({\sqrt{m}})^m|\lambda_k|$.

\hskip 0.5 in U-eigenvector $\{\sqrt{m} (\frac{|\lambda_k|}{\lambda_k})^{1/m} \mathbf{x}^{(1)},\ \cdots,\ \sqrt{m} (\frac{|\lambda_k|}{\lambda_k})^{1/m} \mathbf{x}^{(m)}\}$.
\end{algo}

The following theorem establishes the relationship between Algorithms \ref{algo:1} and \ref{algo:2}.

\begin{theorem}\label{Th:non-symmetric eigenpair}
Let $\mathcal{A}\in\mathbb{C}^{n_1\times\cdots\times n_m}$, $\mathcal{S}=\mathbf{sym}(\mathcal{A})$, $\alpha_\mathcal{S} = m!(m-1)!\ \alpha_\mathcal{A}$, $n=n_1+\cdots+ n_m$. Choose points  $\hat{\mathbf{x}}^{(i)}_0\in\mathbb{C}^{n_{i}}$ with $\|\hat{\mathbf{x}}^{(i)}_0\|\not=0$ for all $i=1:m$. Let $\mathbf{x}_0^{(i)}=\hat{\mathbf{x}}_0^{(i)}/\sqrt{\sum_{j=1}^m ||\hat{\mathbf{x}}_0^{(j)}||^2}$ for all $i=1:m$, and denote
 $\mathbf{x}_0=(\mathbf{x}^{(1)\top}_0,...,\mathbf{x}^{(m)\top}_0)^\top$.
Let  $\mathbf{x}_0$ be the starting point of Algorithm \ref{algo:1}, and assume that $\lambda_{\mathcal{S}_k}$ and $\mathbf{x}_k$ are obtained by the $k$-th iteration of Algorithm \ref{algo:1}. Similarly, let $\{\mathbf{x}^{(1)}_0,\cdots,\mathbf{x}^{(m)}_0\}$ be the starting point of Algorithm \ref{algo:2}, and assume that $\lambda_{\mathcal{A}_k}$ and $\{\mathbf{x}^{(1)}_k,\cdots,\mathbf{x}^{(m)}_k\}$ are obtained by the $k$-th iteration of Algorithm \ref{algo:2}. Then
 $\mathbf{x}_k=(\mathbf{x}^{(1)\top}_k,...,\mathbf{x}^{(m)\top}_k)^\top$ and  $\lambda_{\mathcal{S}_k}= m! \lambda_{\mathcal{A}_k}$.
\end{theorem}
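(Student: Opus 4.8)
The plan is to prove the two claimed identities jointly by induction on the iteration index $k$. The key observation is that the block-decomposition identities (\ref{eq1}) and (\ref{eq4}) derived in the proof of Theorem \ref{Th:eigenpair} used only Definition \ref{def:embebdding} (that $\mathcal{S}_{[\mathbf{i}]}$ equals $\mathcal{A}^{<\mathbf{i}>}$ on permutation blocks and vanishes otherwise), not the eigenvector property of $\mathbf{x}$. Consequently, for an \emph{arbitrary} vector $\mathbf{x}=(\mathbf{x}^{(1)\top},\dots,\mathbf{x}^{(m)\top})^\top\in\mathbb{C}^{n}$ with $\mathbf{x}^{(i)}\in\mathbb{C}^{n_i}$, the chain of equalities in (\ref{eq4}) gives $\langle\mathcal{S},\mathbf{x}^{m}\rangle=m!\,\langle\mathcal{A},\mathbf{x}^{(1)}\cdots\mathbf{x}^{(m)}\rangle$, while conjugating (\ref{eq1}) — and recalling that the quantity $\mathcal{S}\mathbf{x}^{*m-1}$ in Algorithm \ref{algo:1} denotes $\langle\mathbf{x}^{m-1},\mathcal{S}\rangle$, the conjugate of $\langle\mathcal{S},\mathbf{x}^{m-1}\rangle$ — gives, for each $i$, $\bigl(\mathcal{S}\mathbf{x}^{*m-1}\bigr)_{[i]}=(m-1)!\,\mathcal{A}\mathbf{x}^{(1)*}\cdots\mathbf{x}^{(i-1)*}\mathbf{x}^{(i+1)*}\cdots\mathbf{x}^{(m)*}$, where $[i]$ is the $i$-th block with respect to the partition $n=n_1+\cdots+n_m$. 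These two formulas are the engine of the argument.

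For the base case $k=0$, the construction gives $\mathbf{x}_0=(\mathbf{x}_0^{(1)\top},\dots,\mathbf{x}_0^{(m)\top})^\top$ and $\|\mathbf{x}_0\|^2=\sum_{j=1}^m\|\mathbf{x}_0^{(j)}\|^2=1$, so $\mathbf{x}_0$ is a valid initial point for Algorithm \ref{algo:1}; the first identity then yields $\lambda_{\mathcal{S}_0}=\mathcal{S}^*\mathbf{x}_0^m=m!\,\langle\mathcal{A},\mathbf{x}_0^{(1)}\cdots\mathbf{x}_0^{(m)}\rangle=m!\,\lambda_{\mathcal{A}_0}$. For the inductive step, assume $\mathbf{x}_{k-1}=(\mathbf{x}_{k-1}^{(1)\top},\dots,\mathbf{x}_{k-1}^{(m)\top})^\top$ and $\lambda_{\mathcal{S}_{k-1}}=m!\,\lambda_{\mathcal{A}_{k-1}}$. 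Taking the $i$-th block of $\hat{\mathbf{x}}_k=\lambda_{\mathcal{S}_{k-1}}\mathcal{S}\mathbf{x}_{k-1}^{*m-1}+\alpha_\mathcal{S}\mathbf{x}_{k-1}$ and substituting the second identity above, the hypothesis $(\mathbf{x}_{k-1})_{[i]}=\mathbf{x}_{k-1}^{(i)}$, and the couplings $\lambda_{\mathcal{S}_{k-1}}=m!\,\lambda_{\mathcal{A}_{k-1}}$, $\alpha_\mathcal{S}=m!(m-1)!\,\alpha_\mathcal{A}$, one obtains $(\hat{\mathbf{x}}_k)_{[i]}=m!(m-1)!\,\hat{\mathbf{x}}_k^{(i)}$, where $\hat{\mathbf{x}}_k^{(i)}$ is the vector produced in the inner loop of Algorithm \ref{algo:2}. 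A common positive scalar factors through the Euclidean norm, so $\|\hat{\mathbf{x}}_k\|=m!(m-1)!\,\sqrt{\sum_{j=1}^m\|\hat{\mathbf{x}}_k^{(j)}\|^2}$, and hence the normalization steps of the two algorithms agree block-by-block: $(\mathbf{x}_k)_{[i]}=\hat{\mathbf{x}}_k^{(i)}\big/\sqrt{\sum_{j=1}^m\|\hat{\mathbf{x}}_k^{(j)}\|^2}=\mathbf{x}_k^{(i)}$, that is $\mathbf{x}_k=(\mathbf{x}_k^{(1)\top},\dots,\mathbf{x}_k^{(m)\top})^\top$. Applying the first identity to $\mathbf{x}_k$ finally gives $\lambda_{\mathcal{S}_k}=\mathcal{S}^*\mathbf{x}_k^m=m!\,\lambda_{\mathcal{A}_k}$, which closes the induction.

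The step I expect to require the most care is the block reduction formula for $\bigl(\mathcal{S}\mathbf{x}^{*m-1}\bigr)_{[i]}$: besides keeping the complex conjugations straight, one must correctly unwind the mode-permutation bookkeeping buried in $\mathcal{A}^{<\mathbf{i}>}$, namely that for each of the $(m-1)!$ permutations $\mathbf{i}=(i,i_2,\dots,i_m)$ of $1:m$ with first entry $i$ the contraction $\langle\mathcal{A}^{<\mathbf{i}>},\mathbf{x}^{(i_2)}\cdots\mathbf{x}^{(i_m)}\rangle$ collapses to the single value $\langle\mathcal{A},\mathbf{x}^{(1)}\cdots\mathbf{x}^{(i-1)}\mathbf{x}^{(i+1)}\cdots\mathbf{x}^{(m)}\rangle$, producing the factor $(m-1)!$. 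Since this is exactly the computation already performed in (\ref{eq1}), it can be quoted rather than repeated, and everything else reduces to routine propagation of the constant $m!(m-1)!$ through an affine update followed by a normalization in which the constant cancels.
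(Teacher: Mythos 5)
Your proposal is correct and follows essentially the same route as the paper's own proof: induction on $k$, with the block identities $\langle\mathcal{S},\mathbf{x}^{m}\rangle=m!\,\langle\mathcal{A},\mathbf{x}^{(1)}\cdots\mathbf{x}^{(m)}\rangle$ and $\langle\mathcal{S},\mathbf{x}^{m-1}\rangle_{[i]}=(m-1)!\,\langle\mathcal{A},\mathbf{x}^{(1)}\cdots\mathbf{x}^{(i-1)}\mathbf{x}^{(i+1)}\cdots\mathbf{x}^{(m)}\rangle$ quoted from the computation in (\ref{eq1}) and (\ref{eq4}), followed by propagation of the constant $m!(m-1)!$ through the affine update and its cancellation in the normalization. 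The only presentational difference is that you make explicit the observation that these identities hold for arbitrary block vectors rather than only eigenvectors, which the paper leaves implicit with ``similar to (\ref{eq1})''.
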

\begin{proof}
Partition $\mathbf{x}_k$ as $\mathbf{x}_k=(\bar{\mathbf{x}}^{(1)\top}_k,...,\bar{\mathbf{x}}^{(m)\top}_k)^\top$ with $\bar{\mathbf{x}}^{(i)}_k\in \mathbb{C}^{n_i}$. Similar to (\ref{eq1}), we have
\begin{equation}
\label{eq122}  \langle \mathcal{S}, \mathbf{x}_k^{m-1}\rangle_{[i]} = (m-1)! \langle\mathcal{A}, \bar{\mathbf{x}}_k^{(1)}\cdots \bar{\mathbf{x}}_k^{(i-1)}\bar{\mathbf{x}}_k^{(i+1)} \cdots \bar{\mathbf{x}}_k^{(m)}\rangle.
  \end{equation}
\begin{equation}
\label{eq123}  \lambda_{\mathcal{S}_k}=\langle \mathcal{S}, \mathbf{x}_k^{m}\rangle = m! \langle\mathcal{A}, \bar{\mathbf{x}}_k^{(1)}\cdots  \bar{\mathbf{x}}_k^{(m)}\rangle.
  \end{equation}
In the following, we will use the mathematical induction to prove Theorem \ref{Th:non-symmetric eigenpair}.
Let k=0. By the assumption of $\mathbf{x}_0$ and $\{\mathbf{x}_0^{(1)}, \mathbf{x}_0^{(2)}, \cdots,\  \mathbf{x}_0^{(m)}\}$, we have that
$$\bar{\mathbf{x}}^{(i)}_0=\mathbf{x}^{(i)}_0\ (i=1,2, \cdots, m), \  \mathbf{x}_0=(\mathbf{x}^{(1)\top}_0,...,\mathbf{x}^{(m)\top}_0)^\top.$$
By (\ref{eq123}),
$$ \lambda_{\mathcal{S}_0}= m! \langle\mathcal{A}, \bar{\mathbf{x}}_0^{(1)}\cdots  \bar{\mathbf{x}}_0^{(m)}\rangle = m!\langle\mathcal{A}, {\mathbf{x}}_0^{(1)}\cdots  {\mathbf{x}}_0^{(m)}\rangle=m! \lambda_{\mathcal{A}_0} . $$
Hence, the result holds for $k=0$. Assume that the result holds for  $k-1$. That is
$$\bar{\mathbf{x}}^{(i)}_{k-1}=\mathbf{x}^{(i)}_{k-1}\ (i=1,2, \cdots, m), \  \mathbf{x}_{k-1}=(\mathbf{x}^{(1)\top}_{k-1},...,\mathbf{x}^{(m)\top}_{k-1})^\top,  \lambda_{\mathcal{S}_{k-1}}= m! \lambda_{\mathcal{A}_{k-1}}.$$
Partition $\hat{\mathbf{x}}_k$ as $\hat{\mathbf{x}}_k = \{\breve{\mathbf{x}}_k^{(1)},\ \cdots,\ \breve{\mathbf{x}}_k^{(m)}  \} $ with $\breve{\mathbf{x}}_k^{(i)}\in \mathbb{C}^{n_i} $. Then by Algorithm \ref{algo:1}, we have
\begin{eqnarray*}
  \hat{\mathbf{x}}_k &=& \lambda_{\mathcal{S}_{k-1}}\mathcal{S}\mathbf{x}^{*m-1}_{k-1}+\alpha_\mathcal{S}\mathbf{x}_{k-1}
   = \lambda_{\mathcal{S}_{k-1}} \langle\mathcal{S}, \mathbf{x}^{m-1}_{k-1}\rangle^*+\alpha_\mathcal{S}\mathbf{x}_{k-1}.
\end{eqnarray*}
Then,
\begin{eqnarray*}
\breve{\mathbf{x}}^{(i)}_k &=& \lambda_{\mathcal{S}_{k-1}} \langle\mathcal{S}, \mathbf{x}^{m-1}_{k-1}\rangle^*_{[i]}+\alpha_\mathcal{S}\bar{\mathbf{x}}^{(i)}_{k-1}\\
 &=& \lambda_{\mathcal{S}_{k-1}} (m-1)!\ \langle\mathcal{A}, \mathbf{x}_{k-1}^{(1)}\cdots \mathbf{x}_{k-1}^{(i-1)}\mathbf{x}_{k-1}^{(i+1)} \cdots \mathbf{x}_{k-1}^{(m)}\rangle^*+\alpha_\mathcal{S}\mathbf{x}^{(i)}_{k-1}.\\
 &=& m!(m-1)!\ \left( \lambda_{\mathcal{A}_{k-1}}\ \langle\mathcal{A}, \mathbf{x}_{k-1}^{(1)}\cdots \mathbf{x}_{k-1}^{(i-1)}\mathbf{x}_{k-1}^{(i+1)} \cdots \mathbf{x}_{k-1}^{(m)}\rangle^*+ \alpha_\mathcal{A}\mathbf{x}^{(i)}_{k-1}\right)\\
 &=& m!(m-1)!\ \hat{\mathbf{x}}^{(i)}_{k}.
\end{eqnarray*}
Hence,
$$\hat{\mathbf{x}}_k = m!(m-1)! (\hat{\mathbf{x}}^{(1)\top}_{k}, \hat{\mathbf{x}}^{(2)\top}_{k}, \cdots, \hat{\mathbf{x}}^{(m)\top}_{k})^\top.$$
It follows that
$$\mathbf{x}_k =\frac{\hat{\mathbf{x}}_k}{||\hat{\mathbf{x}}_k||} =\frac{(\hat{\mathbf{x}}^{(1)\top}_{k}, \hat{\mathbf{x}}^{(2)\top}_{k}, \cdots, \hat{\mathbf{x}}^{(m)\top}_{k})^\top}{\sqrt{\sum_{j=1}^m || \hat{\mathbf{x}}^{(j)}_{k} ||^2 }} = (\mathbf{x}^{(1)\top}_{k}, \mathbf{x}^{(2)\top}_{k}, \cdots, \mathbf{x}^{(m)\top}_{k})^\top.$$
By (\ref{eq123}),
$$ \lambda_{\mathcal{S}_k}= m! \langle\mathcal{A}, \bar{\mathbf{x}}_k^{(1)}\cdots  \bar{\mathbf{x}}_k^{(m)}\rangle = m!\langle\mathcal{A}, {\mathbf{x}}_k^{(1)}\cdots  {\mathbf{x}}_k^{(m)}\rangle=m! \lambda_{\mathcal{A}_k} . $$
Hence, the result holds for  $k$. This completes the proof. \ \ \ $\Box$
\end{proof}

{\it Remark 1.} The convergence of Algorithm \ref{algo:1} has already been proved in \cite{NB16}. By Theorem \ref{Th:non-symmetric eigenpair}, Algorithm \ref{algo:2} is also convergent.

{\it Remark 2.}  In \cite{QZN2017} an algorithm is proposed to compute the U-eigenpairs of non-symmetric complex tensors.  The difference between it and Algorithm \ref{algo:2} is  that they use different normalization conditions. To be specific, in Algorithm \ref{algo:2}, the normalization condition  is $\mathbf{x}_k^{(i)} = \frac{\hat{\mathbf{x}}_k^{(i)}}{\sqrt{\sum_{j=1}^m ||\hat{\mathbf{x}}_k^{(j)}||^2}} $, while the normalization condition  used in the algorithm of  \cite{QZN2017} is $\mathbf{x}_k^{(i)} = \frac{\hat{\mathbf{x}}_k^{(i)}}{ ||\hat{\mathbf{x}}_k^{(i)}||}$. This is a key difference as the new normalization condition enables us to establish the convergence of Algorithm \ref{algo:2}.

Inspired by the well-known Gauss-Seidel method \cite{Zeng2004}, we propose the following algorithm which may improve computational efficiency of   Algorithm \ref{algo:2}.

\begin{algo}\label{algo:3}
The Gauss-Seidel method for computing U-eigenpairs of an $n_1\times\cdots\times n_m$ non-symmetric complex tensor $\mathcal{A}$.

{\bf Step 1 (Initial step)}:
Choose starting points $\mathbf{x}_0^{(i)}\in\mathbb{C}^{n_i}$ with $||\mathbf{x}_0^{(i)}||=1$ for all $i=1:m$. Let $\lambda_{0}=\langle\mathcal{A}, \mathbf{x}^{(1)}_0\cdots\mathbf{x}^{(m)}_0\rangle$. Choose $0<\alpha_{\mathcal{A}}\in \mathbb{R}$.

{\bf Step 2 (Iterating step)}:

\hskip 0.5 in {\bf for} $k=1,2,\cdots$, {\bf do}

\hskip 0.7 in {\bf for} $i=1,2,\cdots, m$, {\bf do}

\hskip 0.8 in \ \ \ $ \hat{\mathbf{x}}^{(i)}_k = \lambda_{k-1}\mathcal{A}\mathbf{x}^{(1)*}_{k} \cdots \mathbf{x}^{(i-1)*}_{k} \mathbf{x}^{(i+1)*}_{k-1}\cdots\mathbf{x}^{(m)*}_{k-1} + \alpha_{\mathcal{A}}\mathbf{x}^{(i)}_{k-1}$,

\hskip 0.8 in \ \ \  $ \mathbf{x}^{(i)}_k = \hat{\mathbf{x}}^{(i)}_k/ \|\hat{\mathbf{x}}^{(i)}_k\| $.

\hskip 0.7 in {\bf end for.}

\centerline {$ \lambda_k =\langle\mathcal{A}, \mathbf{x}^{(1)}_k\cdots\mathbf{x}^{(m)}_k\rangle $.}

\hskip 0.5 in {\bf end for.}

{\bf return}:

\hskip 0.5 in U-eigenvalue $\lambda_\mathcal{A}=|\lambda_k|$, U-eigenvector $\mathbf{x}^{(i)}=(\frac{\lambda_\mathcal{A}}{\lambda_k})^{1/m}\mathbf{x}^{(i)}_k$.
\end{algo}

\section{Numerical examples}
The first six examples compute the GME of pure states by Algorithms \ref{algo:1}-\ref{algo:3}, respectively. We compare their CPU time and iteration step. In the last two examples, we compute U-eigenvalues of non-symmetric complex tensors by Algorithm \ref{algo:3} and the network method \cite{CQWZ18}, respectively, and compare their CPU time and iteration step.

The computations are implemented in Mathematica 8.0 on a Microsoft Win10 Laptop with 8GB memory and Intel(R) i5 CPU 2.40GHZ. In the first six examples, we take 10 starting points to get the results, and obtain the maximum eigenvalue from them. Four decimal digits are presented, and two decimal digits are presented for CPU time.

\begin{example}\label{exam1}
  (\cite[Example 6]{hqz12}) Consider a non-symmetric 3-partite state $$|\psi\rangle= \sqrt{\frac{1}{3}}|001\rangle+\sqrt{\frac{2}{3}}|100\rangle.$$
 It corresponds to a $2\times2\times2$ non-symmetric tensor $\mathcal{A}$ with nonzero entries $\mathcal{A}_{112}=\sqrt{\frac{1}{3}},\ \mathcal{A}_{211}=\sqrt{\frac{2}{3}}$.

 We apply Algorithms \ref{algo:1}-\ref{algo:3} to get the largest U-eigenvalue $\lambda_{\mathcal{A}}$ of $\mathcal{A}$ and the GME of the state $|\psi\rangle$ respectively. The iteration is terminated when the numerical error is less than $10^{-9}$. The computational results are shown in Table \ref{Tb1}.

\begin{table}[!h]
\centering
\begin{tabular}{|l|c|c|c|c|}

\hline
Algorithms &$\lambda_{\mathcal{A}}$ & GME  &Time(sec)\\
\hline
Algorithm \ref{algo:1}& 0.8165& 0.6058&2.75\\
\hline
Algorithm \ref{algo:2}&0.8165& 0.6058&1.02\\
\hline
Algorithm \ref{algo:3}&0.8165& 0.6058&0.23\\
\hline
\end{tabular}
\caption{Computational results for Example \ref{exam1}}\label{Tb1}
\end{table}

The GME of the state $|\psi\rangle$ is $0.6058$ with the closest product state calculated by Algorithm \ref{algo:3}
$$|\phi\rangle = |\phi_1\rangle \times |\phi_2\rangle \times |\phi_3\rangle,$$
where
$$|\phi_1\rangle = (\ \ 0.8350-0.5502\mathrm{i})|1\rangle,$$
$$|\phi_2\rangle = (-0.3980+0.9174\mathrm{i})|0\rangle, $$
$$|\phi_3\rangle = (\ \ 0.1724-0.9850\mathrm{i})|0\rangle.$$

We can observe from Table \ref{Tb1} that Algorithms \ref{algo:1}-\ref{algo:3} obtain the same value of GME. However, Algorithm \ref{algo:1} takes  more time than Algorithms \ref{algo:2} and \ref{algo:3}.
 \end{example}


\begin{example}\label{exam2}
  (\cite[Example 9]{QZN2017}) Consider a non-symmetric 3-partite state $$|\psi\rangle= \sqrt{\frac{1}{6}}(|000\rangle+|101\rangle+|012\rangle+|110\rangle+|021\rangle+|122\rangle).$$
 It corresponds to a $2\times3\times3$ non-symmetric tensor $\mathcal{A}$ with nonzero entries $\mathcal{A}_{111}=\mathcal{A}_{212}=\mathcal{A}_{123}=\mathcal{A}_{221}=\mathcal{A}_{132} =\mathcal{A}_{233}=\sqrt{\frac{1}{6}}$.

 We apply Algorithms \ref{algo:1}-\ref{algo:3} to get the maximum U-eigenvalue $\lambda_{\mathcal{A}}$ of $\mathcal{A}$ and the GME of the state $|\psi\rangle$ respectively. The iteration is terminated when the numerical error is less than $10^{-9}$. The computational results are shown in Table \ref{Tb2}.

\begin{table}[!htbp]
\centering
\begin{tabular}{|l|c|c|c|}

\hline
Algorithms &$\lambda_{\mathcal{A}}$ & GME  &Time(sec)\\
\hline
Algorithm \ref{algo:1}& 0.5774& 0.9194&5.97\\
\hline
Algorithm \ref{algo:2}& 0.5774& 0.9194&1.12\\
\hline
Algorithm \ref{algo:3}& 0.5774& 0.9194&0.19\\
\hline
\end{tabular}
\caption{Computational results for Example \ref{exam2}}\label{Tb2}
\end{table}

The GME of the state $|\psi\rangle$ is $0.9194$ with the closest  product state calculated by Algorithm \ref{algo:3}
$$|\phi\rangle = |\phi_1\rangle \times |\phi_2\rangle \times |\phi_3\rangle,$$
where
\begin{eqnarray}
\nonumber |\phi_1\rangle &=&(-0.0336+0.7063\mathrm{i})|0\rangle+(\ \ 0.6285 -0.3240\mathrm{i})|1\rangle,\\
\nonumber |\phi_2\rangle &=&(-0.0353-0.5763\mathrm{i})|0\rangle+( 0.5167+0.2576\mathrm{i})|1\rangle-(0.4814-0.3187\mathrm{i})|2\rangle,\\
\nonumber |\phi_3\rangle &=&(\ \ 0.5773 +0.0079\mathrm{i})|0\rangle-(0.2955-0.4960\mathrm{i})|1\rangle-(0.2818+0.5039\mathrm{i})|2\rangle.
\end{eqnarray}

Compared to Example \ref{exam1}, we increase the dimension of the quantum state $|\psi\rangle$, and the corresponding $\mathbf{sym}(\mathcal{A})$ obtained by Algorithm \ref{algo:1} is an $8\times8\times8$ symmetric tensor. We can observe from Table \ref{Tb2} that Algorithms \ref{algo:1}-\ref{algo:3} get the same GME value; however, the time Algorithm \ref{algo:1} takes is much longer than that of Algorithms \ref{algo:2} and \ref{algo:3}.
\end{example}

\begin{example}\label{exam3}
 (\cite[(37)]{MIZ2016}) Consider the 5-qubit AME state
 \begin{eqnarray*}
   |\psi\rangle &=& \frac{1}{ 2\sqrt{2}}(|00000\rangle+|00011\rangle+|01100\rangle -|01111\rangle +|11010\rangle \\
    & &  +|11001\rangle +|10110\rangle-|10101\rangle).
 \end{eqnarray*}
It corresponds to a $2\times2\times2\times2\times2$ non-symmetric tensor $\mathcal{A}$ with nonzero entries $\mathcal{A}_{11111}=\mathcal{A}_{11122}=\mathcal{A}_{12211}=\mathcal{A}_{22121}=\mathcal{A}_{22112} =\mathcal{A}_{21221}=\frac{1}{ 2\sqrt{2}},\ \mathcal{A}_{12222}=\mathcal{A}_{21212}=-\frac{1}{ 2\sqrt{2}}$.

 We apply Algorithms \ref{algo:1}-\ref{algo:3} to get the maximum U-eigenvalue $\lambda_{\mathcal{A}}$ of $\mathcal{A}$ and the GME of the state $|\psi\rangle$ respectively. The iteration is terminated when the numerical error is less than $10^{-9}$. The computational results are shown in Table \ref{Tb3}.

\begin{table}[!htbp]
\centering
\begin{tabular}{|l|c|c|c|}

\hline
Algorithms &$\lambda_{\mathcal{A}}$ & GME   &Time(sec)\\
\hline
Algorithm \ref{algo:1} &0.3626& 1.1291&2628.76\\
\hline
Algorithm \ref{algo:2} &0.3626& 1.1291&14.89\\
\hline
Algorithm \ref{algo:3} &0.3626& 1.1291&2.42\\
\hline
\end{tabular}
\caption{Computational results for Example \ref{exam3}}\label{Tb3}
\end{table}
The GME of the state $|\psi\rangle$ is $1.1291$ with the closest product state calculated by Algorithm \ref{algo:3}
$$|\phi\rangle = |\phi_1\rangle \times |\phi_2\rangle \times |\phi_3\rangle \times |\phi_4\rangle \times |\phi_5\rangle,$$
where
\begin{eqnarray}
\nonumber |\phi_1\rangle &=&(-0.1455 +0.4361\mathrm{i})|0\rangle+(-0.7944+0.3969\mathrm{i})|1\rangle,\\
\nonumber |\phi_2\rangle &=&(-0.0940 -0.4500\mathrm{i})|0\rangle+(-0.7431-0.4863\mathrm{i})|1\rangle,\\
\nonumber |\phi_3\rangle &=&(\ \ 0.4913+0.7398\mathrm{i})|0\rangle+(-0.4506-0.0910\mathrm{i})|1\rangle,\\
\nonumber |\phi_4\rangle &=&(\ \ 0.8856+0.0669\mathrm{i})|0\rangle+(\ \ 0.2996+0.3486\mathrm{i})|1\rangle,\\
\nonumber |\phi_5\rangle &=&(-0.1751-0.4251\mathrm{i})|0\rangle+(\ \ 0.3415-0.8198\mathrm{i})|1\rangle.
\end{eqnarray}

Compared to Examples \ref{exam1} and \ref{exam2}, we increased the order of the quantum state $|\psi\rangle$, and the corresponding tensor $\mathbf{sym}(\mathcal{A})$ is a $10\times10\times10\times10\times10$ symmetric tensor. We can observe from Table \ref{Tb3} that although the three algorithms give us the same GME value. However, it takes Algorithm \ref{algo:1} much longer time to get the results.
\end{example}

It can be seen from the above examples that, when the order and dimension of the quantum state increase, the size of the symmetric tensor used in Algorithm \ref{algo:1} is getting more and more large, which directly leads to a decrease in the efficiency of Algorithm \ref{algo:1}. Hence,  in the following examples we only compare Algorithms \ref{algo:2} and \ref{algo:3}.

\begin{example}
(\cite[Example 4.3]{NZZ2017})\label{exam4}
    Consider a non-symmetric 3-partite state
   $$|\psi\rangle= \sum_{i_1, i_2, i_3=1}^n \frac{\cos(i_1-i_2+i_3)+\sqrt{-1}\  \sin(i_1+i_2-i_3)}{\sqrt{n^3}}|(i_1-1)(i_2-1)(i_3-1)\rangle.$$
It corresponds to an $n \times n\times n$ non-symmetric tensor $\mathcal{A}$ with $$\mathcal{A}_{i_1i_2i_3}= \frac{\cos(i_1-i_2+i_3)+\sqrt{-1}\  \sin(i_1+i_2-i_3)}{\sqrt{n^3}}.$$

\begin{table}[!h]
\centering
\begin{tabular}{|c|c|c|c|c|}

\hline
n&$\lambda_{\mathcal{A}}$ & GME  & Algo \ref{algo:2}(sec) & Algo \ref{algo:3}(sec) \\
\hline
2 & 0.8895& 0.4701&0.47&0.09\\
\hline
5& 0.7815& 0.6611&5.90&1.62\\
\hline
10& 0.7072& 0.7652&51.59&8.74\\
\hline
15& 0.7243& 0.7425&173.06&11.53\\
\hline
20& 0.7175& 0.7516&969.73&96.07\\
\hline
50& 0.7087& 0.7632&127332.83&2070.27\\
\hline
\end{tabular}
\caption{Computational results for Example \ref{exam4}}\label{Tb4}
\end{table}

For a range of values of $n$ from 2 to 50, we apply Algorithms \ref{algo:2} and \ref{algo:3} to get the maximum U-eigenvalue $\lambda_{\mathcal{A}}$ of $\mathcal{A}$ and the GME of  the state $|\psi\rangle$. The iteration is terminated when the numerical error is less than $10^{-9}$. The computational results are presented in Table \ref{Tb4}.

We can observe from Table \ref{Tb4} that both Algorithm \ref{algo:2} and Algorithm \ref{algo:3} can obtain the same GME value of the state $|\psi\rangle$ for different $n$. However, as $n$ increases,  Algorithm \ref{algo:3} becomes much more efficient than Algorithm \ref{algo:2}.
\end{example}

\begin{example}\label{exam5}
  (Random Examples)
   In this example, we randomly generate quantum pure states, and compute their GME by means of Algorithms \ref{algo:2} and \ref{algo:3}.
   For a range of values of order $m$ from 4 to 5, we apply both Algorithm \ref{algo:2} and Algorithm \ref{algo:3} to get the maximum U-eigenvalue $\lambda_{\mathcal{A}}$ of $\mathcal{A}$ and the GME of the state $|\psi\rangle$. The iteration is terminated when the numerical error is less than $10^{-9}$.  The computational results are presented in Tables \ref{Tb5} and \ref{Tb6} respectively.

\begin{table}[!h]
\centering
\begin{tabular}{|c|c|c|c|c|}

\hline
$(n_1\times n_2\times n_3\times n_4)$&$\lambda_{\mathcal{A}}$ & GME & Algo \ref{algo:2}(sec) & Algo \ref{algo:3}(sec) \\
\hline
$3\times3 \times 3\times 3$&0.4854&  1.0145&21.40& 2.84\\
\hline
$5\times5 \times 5\times 5$& 0.2555& 1.2203& 242.13&66.06\\
\hline
$2\times5 \times 8\times 15$&0.2240 & 1.2458&656.30&131.14\\
\hline
\end{tabular}
\caption{Computational results for Example \ref{exam5} with $m=4$}\label{Tb5}
\end{table}

\begin{table}[!h]
\centering
\begin{tabular}{|c|c|c|c|c|}

\hline
$(n_1\times n_2\times n_3\times n_4\times n_5)$&$\lambda_{\mathcal{A}}$ & GME & Algo \ref{algo:2}(sec)& Algo \ref{algo:3}(sec)\\
\hline
$2\times2 \times 2\times 2\times 2$& 0.5475&0.9513&9.70&1.59\\
\hline
$8\times2 \times 3\times 5\times 4$&0.2164 & 1.2519& 720.05 &103.35 \\
\hline
$10\times3 \times 15\times 2\times 5$&0.1322 &1.3175&12194.04&1406.26\\
\hline
\end{tabular}
\caption{Computational results for Example \ref{exam5} with $m=5$} \label{Tb6}
\end{table}

We can observe from Tables \ref{Tb5} and \ref{Tb6} that the same GME values of the state $|\psi\rangle$ are obtained by both    Algorithm \ref{algo:2} and Algorithm \ref{algo:3} within a reasonable time. When the order $m$ or dimension $n$ increases, it will take more time to find the GME of  the state $|\psi\rangle$. Moreover, it is clear that Algorithm \ref{algo:3} is always more efficient than Algorithm \ref{algo:2}.
\end{example}
\begin{example}(\cite[(31)]{GBZ16}, \cite[Example 11]{QZN2017})\label{exam6}
Given a  $3\times3\times3\times3\times3\times2$  pure state
\begin{eqnarray*}
\ket{\psi} &=&\frac{1}{3\sqrt{2}}(\ket{000000}+\ket{001121}+\ket{010220}\\
&& + \ket{012011}+\ket{021210}+\ket{022101}\\
&&+\ket{111110}+\ket{112201}+\ket{121000}\\
&&+\ket{120121}+\ket{102020}+\ket{100211}\\
&&+\ket{222220}+\ket{220011}+\ket{202110}\\
&&+\ket{201201}+\ket{210100}+\ket{211021}),
\end{eqnarray*}
Algorithms \ref{algo:2} and \ref{algo:3} give the same GME value 1.2364, which is the same as that of \cite[Example 11]{QZN2017}. The time  Algorithms \ref{algo:2} takes is 1298.95 seconds, while  the time Algorithms \ref{algo:3} takes is merely 9.70 seconds.
\end{example}

We end this section with two final examples to compare Algorithm \ref{algo:3} with the neural network method peoposed by Che et al.\cite{CCW17,CQWZ18} for computing the U-eigenvalues of complex tensors.

\begin{example}\label{exam7}
    Consider a $10\times8\times5\times7$ non-symmetric complex tensor $\mathcal{A}$ with nonzero entries
$\mathcal{A}_{8726}=\frac{1}{\sqrt{6}}, \mathcal{A}_{9543}=\frac{1}{\sqrt{3}}, \mathcal{A}_{1221}=\frac{1}{\sqrt{6}}\mathrm{i}, \mathcal{A}_{3812}=-\frac{1}{\sqrt{3}}$.
	
We apply Algorithms \ref{algo:3} and the neural network method \cite{CQWZ18} to compute the  U-eigenvalue $\lambda_{\mathcal{A}}$ of $\mathcal{A}$, respectively, and compare their iteration step and CPU time. The iteration is terminated when the numerical error is less than $10^{-9}$. The computational results are shown in Table \ref{Tb7}. It can be observed that these two algorithms obtain the same U-eigenvalue, and Algorithm \ref{algo:3} takes less CPU time and iteration step than the network method.
	
	\begin{table}[!htbp]
		\centering
		\begin{tabular}{|l|c|c|c|}
			
			\hline
			Algorithm &$\lambda_{\mathcal{A}}$ & Iteration Step   & CPU Time(sec)\\
			\hline
			Neural network method &0.5774 & 133 & 25.20\\
			\hline
			Algorithm \ref{algo:3} & 0.5774 & 25 & 5.60\\
			\hline
		\end{tabular}
		\caption{Computational results for Example \ref{exam7}}\label{Tb7}
	\end{table}
	
\end{example}

\begin{example}\label{exam8}
Here, we will show the the performance of these two algorithms in the computation of U-eigenvalues for different random complex tensors. The computation results are presented in Table \ref{Tb8}, where T1, T2, Iter1 and Iter2  represent for the CPU time for  the neural network method, the CPU time for  Algorithm \ref{algo:3}, the iteration steps for the  the neural network method, and the iteration steps for Algorithm \ref{algo:3}, respectively. The iteration is terminated when the numerical error is less than $10^{-9}$. It can be observed that these two method obtain the same U-eigenvalues; however, Algorithm \ref{algo:3} always takes less CPU time and iteration step than the neural network method. Of course, we should not say that Algorithm 3.3 is always better in all circumstances.
	
\begin{table}[!h]
	\centering
	\begin{tabular}{|c|c|c|c|c|c|}
		
		\hline
		$Tensors$&$\lambda_{\mathcal{A}}$& T1(sec) & T2(sec) & Iter1 & Iter2 \\
		\hline
		$2\times2\times 3$&0.7907&0.14& 0.03& 195& 21\\
		\hline
		$3\times3\times 3$&0.5798&0.47& 0.08& 352& 41\\
		\hline
		$3\times4 \times 5$&0.5729& 1.10&0.14& 403& 41\\
		\hline
		$2\times2 \times 2 \times 2$&0.6040&0.32&0.07& 276& 44\\
		\hline
		$3\times3 \times 5 \times 5$& 0.3569&57.86&5.89& 4407& 367\\
		\hline
		$2\times3 \times 4 \times 5\times 2$& 0.3298&47.58&4.98& 2510& 223\\
		\hline
	\end{tabular}
	\caption{Computational results for Example \ref{exam8}}\label{Tb8}
\end{table}

\end{example}

\section{Conclusion}

In this paper, we have proposed three different methods to compute U-eigenvalues of non-symmetric complex tensors and geometric measures of entanglement of non-symmetric pure states. The theory of symmetric embedding has been generalized from real tensors to complex tensors, and the relationship between U-eigenvalues of a non-symmetric complex tensor and US-eigenvalues of its symmetric embedding tensor has been established. Three algorithms have been given. Algorithm \ref{algo:1} computes the U-eigenvalues of a complex tensor by means of symmetric embedding, Algorithm \ref{algo:2} computes directly the U-eigenvalues of a complex tensor, and Algorithm \ref{algo:3} is a tensor version of the Gauss-Seidel method.  The convergence of Algorithms \ref{algo:1} and \ref{algo:2} has been proved.  Numerical examples are used to demonstrate Algorithms \ref{algo:1}-\ref{algo:3}, and it is observed that Algorithm \ref{algo:3} is more computationally efficient than the other two algorithms. The convergence analysis of Algorithm \ref{algo:3} is  our future research.


\begin{acknowledgements}
The authors would like to thank the editors and two anonymous referees for their valuable
suggestions, which helped us to improve this manuscript.
The first and the second authors' work is partially supported by the National Natural Science Foundation of China (No. 11871472), and the third author's work is partially
supported by the Hong Kong Research Grant Council under grants 15206915 and 15208418.
\end{acknowledgements}



\end{document}